\documentclass[11pt,a4paper]{article}
\usepackage[T1]{fontenc}
\usepackage[utf8]{inputenc}
\usepackage{lmodern}
\usepackage[english]{babel}
\usepackage{amsmath,amssymb,amsthm,mathtools}
\usepackage{geometry}
\usepackage{authblk}
\usepackage[colorlinks=true,linkcolor=blue,citecolor=blue,urlcolor=blue]{hyperref}
\numberwithin{equation}{section}
\newtheorem{theorem}{Theorem}[section]
\newtheorem{lemma}[theorem]{Lemma}

\theoremstyle{remark}

\newcommand{\K}{\mathbb K}

\newcommand{\Mat}{\operatorname{Mat}}
\newcommand{\qdet}{\operatorname{det}_{q}}
\newcommand{\one}{\mathbf1}

\title{Commuting Families of Flag Quasideterminants\\
in Reflection Equation Algebras of Type $A$}
\author[1,2]{Dmitry V. Talalaev\thanks{E-mail: dtalalaev@yandex.ru}}
\affil[1]{\small Lomonosov Moscow State University, 119991 Moscow, Russia}
\affil[2]{\small P. G. Demidov Yaroslavl State University,\newline 14 Sovetskaya St., Yaroslavl 150003, Russia}
\date{October 2026}

\begin{document}
\maketitle
\begin{abstract}
We consider a family of northeastern antidiagonal quasideterminants
associated with the successive blocks of the generating matrix of a
reflection equation (RE) algebra of type $A_{n-1}$. Our main result
proves the conjecture on their pairwise commutativity formulated in
\cite{Talalaev2024}. The proof uses the triangular realization of the
RE algebra in a quantum enveloping algebra and the flag property of
quasideterminants. We also discuss the interpretation of these elements
in terms of quantum cluster coordinates and their possible connection
with generalized Toda systems.
\end{abstract}

\noindent\textbf{Keywords:} reflection equation algebra, RLL realization,
quasideterminants, quantum minors, commutative subalgebras,
Gelfand--Tsetlin subalgebras, quantum cluster coordinates,
$k$-chop method.

\section{Introduction}

The construction of explicitly defined commutative subalgebras in
quantum matrix algebras and quantum universal enveloping algebras is
one of the basic algebraic problems in the theory of integrable
systems. Standard examples include Bethe subalgebras arising from
transfer matrices and Gelfand--Tsetlin subalgebras generated by the
centers of chains of nested algebras.

In \cite{Talalaev2024}, quasideterminants associated with consecutive
northeastern blocks of the generating matrix of an RE algebra were
introduced. Their commutativity was established in low ranks, while
the general statement was formulated as a conjecture. Here we give a
proof based on a triangular realization of the quantum group and an
embedding of the RE algebra into it.

The crucial point is that the quasideterminants under consideration
are noncommutative flag (Pl\"ucker-type) coordinates and are invariant
under multiplication by unipotent triangular matrices over a skew
field. Applying this property to the RLL realization of the RE matrix,
we express the relevant elements through ratios of nested quantum RTT
minors and Cartan factors. The centrality of a quantum determinant in
a square RTT algebra implies commutativity of the nested minors; the
remaining weight factors cancel because the corresponding exponent
matrix is symmetric.

We then explain the analogy with the Gelfand--Tsetlin construction and
discuss a connection with a distinguished family of quantum cluster
coordinates. We also compare our construction with the $k$-chop
reduction developed by Skrypnyk \cite{Skrypnyk2000}. The latter takes
place in $U(\mathfrak{gl}_n)$ and yields a larger spectral family.

\section{The RE algebra and the main theorem}

We begin with the basic definitions. Let $\K=\mathbb Q(q)$ and
$P=\sum_{i,j}E_{ij}\otimes E_{ji}$. The standard FRT matrix of
type $A_{n-1}$ is
\begin{equation}\label{eq:RF}
 R_F=q\sum_iE_{ii}\otimes E_{ii}
 +\sum_{i\ne j}E_{ii}\otimes E_{jj}
 +(q-q^{-1})\sum_{i<j}E_{ij}\otimes E_{ji}.
\end{equation}
We also use the braiding $\check R_{\mathrm T}=R_FP$. Denote by
$\mathcal A_q(n)$ the RE algebra generated by the entries of the matrix
$L=(l_{ij})$ subject to
\begin{equation}\label{eq:RE}
 \check R_{\mathrm T}L_1\check R_{\mathrm T}L_1
 =L_1\check R_{\mathrm T}L_1\check R_{\mathrm T},
 \qquad L_1=L\otimes\one.
\end{equation}

For $1\leq k\leq n$ define the submatrix and its quasideterminant by
\begin{equation}\label{eq:Ik}
 B_k=L_{\{1,\ldots,k\},\{n-k+1,\ldots,n\}},
 \qquad I_k=|B_k|_{k,1}.
\end{equation}
The indices $k,1$ specify the row and column of the distinguished
entry $l_{k,n-k+1}$. Recall the definition
\begin{equation}\label{eq:qd}
 |A|_{ij}=a_{ij}-r_i^{\widehat j}
 (A^{ij})^{-1}c_j^{\widehat i},
\end{equation}
where $A^{ij}$ is obtained from $A$ by deleting row $i$ and column
$j$. All expressions below are interpreted in a suitable skew field of
fractions in which the required inverses exist.

\begin{theorem}\label{thm:main}
For generic $q$, the northeastern flag quasideterminants
$I_1,\ldots,I_n$ commute pairwise:
\begin{equation}\label{eq:main}
 \boxed{[I_i,I_j]=0,\qquad 1\leq i,j\leq n.}
\end{equation}
\end{theorem}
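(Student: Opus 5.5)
The approach follows the strategy announced in the introduction. I would realize $\mathcal A_q(n)$ inside the quantum group through the triangular (RLL) factorization and then rewrite each $I_k$ using the invariance of flag quasideterminants under unipotent triangular multiplication.

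\textbf{Step 1: realization.} Write $L=L^+S(L^-)$, or equivalently $L=T^{+}(T^{-})^{-1}$ up to conventions. Here $L^+$ is upper triangular and $L^-$ is lower triangular, both satisfying RLL relations with $R_F$. This map $\mathcal A_q(n)\to U_q(\mathfrak{gl}_n)$ is a homomorphism; I take it to be injective for generic $q$, as is standard. It therefore suffices to prove $[I_i,I_j]=0$ for the images, computed in the skew field of fractions of the quantum group.

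\textbf{Step 2: factorization of the triangular pieces.} Decompose each of $L^\pm$ into a Cartan part and a unipotent part, $L^+=D^+U^+$ and $L^-=U^-D^-$. The block $B_k$ uses rows $1,\dots,k$ and columns $n-k+1,\dots,n$. Right multiplication by a unipotent factor whose action respects the northeastern flag of columns leaves $|B_k|_{k,1}$ unchanged, up to substituting the relevant row and column minors. This is the flag property: a quasideterminant $|A|_{k,1}$ of a $k\times k$ block is a ratio of noncommutative Plücker coordinates, and it is unchanged when rows are modified by adding combinations of earlier rows, or columns by adding combinations of later columns within the flag. Applying this, I expect to obtain
\[
I_k=\Delta_k\,\Delta_{k-1}^{-1}\cdot h_k ,
\]
where $\Delta_k$ are quantum minors of a single RTT matrix. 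Concretely, $\Delta_k$ is the quantum determinant of the $k\times k$ block of $T^-$ (or of $T^+$) on rows $1..k$ and columns $n-k+1..n$, and $h_k$ is a monomial in the Cartan generators $K_i^{\pm1}$ coming from $D^\pm$. The antidiagonal choice of the corner $l_{k,n-k+1}$ is exactly what makes these blocks nested. The $k$-th block in the RTT algebra then contains the $(k-1)$-st block as a corner, so the $\Delta_k$ form a chain.

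\textbf{Step 3: commutativity.} The quantum minors $\Delta_k$ and $\Delta_m$ commute for $m<k$. The reason is that $\Delta_k$ is the quantum determinant of the square RTT subalgebra generated by the block, hence central in that subalgebra. Since $\Delta_m$ lies in that subalgebra, the two commute. The Cartan monomials commute with each other. Each $\Delta_m$ is a weight vector, so $K^{a}\Delta_m=q^{(a,\mathrm{wt}\Delta_m)}\Delta_m K^{a}$. Consequently
\[
I_iI_j=q^{c_{ij}-c_{ji}}\,I_jI_i ,
\]
where $c_{ij}$ is the pairing of the Cartan exponent of $h_i$ with the weight of $\Delta_j\Delta_{j-1}^{-1}$. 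The final task is to show that the matrix $(c_{ij})$ is symmetric.

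\textbf{Main obstacle.} The hard part is Step 2 together with the bookkeeping for $(c_{ij})$. I must pin down exactly which Cartan factors survive in $I_k$; I expect something like a product of $K$'s over the indices $n-k+1..n$ and $1..k$. Then I must compute the weight of $\Delta_k$, which is $\epsilon_1+\dots+\epsilon_k$ on one side and $\epsilon_{n-k+1}+\dots+\epsilon_n$ on the other. Symmetry of the pairing should follow because both ingredients are built from the same nested index sets. I would first verify the $n=2,3$ cases against the low-rank results of \cite{Talalaev2024} to fix conventions before doing the general computation.
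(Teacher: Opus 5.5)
Your outline follows the paper's route: triangular realization, absorbing a triangular factor into the flag quasideterminant, nested quantum minors central in square RTT blocks, and symmetric Cartan exponents. However, the two steps you defer as the ``main obstacle'' are exactly the content of the proof, and your guesses for them are off.

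\textbf{Which factor is absorbed.} Only one of the two triangular factors can be absorbed, and the conventions decide which one. The unconjugated $T^+(T^-)^{-1}$ satisfies the reflection equation with braiding $PR_F$. The realization of $\mathcal A_q(n)$ is $\Phi(L)=JT^+(T^-)^{-1}J=BC$, with $B=JT^+J$ lower triangular and $C=J(T^-)^{-1}J$ upper triangular. The precise covariance $F_k(BAC)=b_{kk}F_k(A)c_{n-k+1,n-k+1}$, for lower triangular $B,C$, then gives $\Phi(I_k)=K_{n-k+1}Q_k$ with $Q_k=F_k(C)$. So the Cartan factor is the \emph{single} generator $K_{n-k+1}$, not a product over $\{n-k+1,\dots,n\}$ and $\{1,\dots,k\}$.

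The factor $C$ cannot be split into a Cartan and a unipotent part and absorbed. Right multiplication by an upper triangular matrix mixes columns $n-k+1,\dots,n$ with columns outside the block. So $Q_k$ must be computed as a scalar multiple of $\Delta_k\Delta_{k-1}^{-1}$, where the $\Delta_k$ are minors of $C=J(T^-)^{-1}J$, not of $T^-$. You therefore also need that $C$ is an $R_F$-RTT matrix. To see this, invert the $T^-$ relation, conjugate by $J\otimes J$ (which sends $R_F$ to $(R_F)_{21}$), and use $(R_F)_{21}^{-1}=R_F-(q-q^{-1})P$.

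\textbf{Symmetry of the exponents.} The symmetry of $(c_{ij})$ does not follow from the index sets being ``the same''. The mixed relation $R_FT^+_1T^-_2=T^-_2T^+_1R_F$ gives $K_sQ_kK_s^{-1}=q^{\delta_{sk}-\delta_{s,n-k+1}}Q_k$. With $h_i=K_{n-i+1}$ this yields $c_{ij}=\delta_{i+j,n+1}-\delta_{ij}$, which is symmetric. The symmetry hinges on that exact form of $h_i$. If instead $h_i=\prod_{r=n-i+1}^{n}K_r$, then already for $n=3$ you get $c_{12}=0$, since $Q_2$ has weight zero. But $c_{21}=-1$, since $K_2K_3$ conjugates $Q_1$ by $q^{-1}$. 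Commutativity would then fail.

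So the strategy is sound, but the proof is incomplete until you establish the precise covariance formula, the RTT property of the non-absorbed factor, and this explicit weight computation.
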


\section{Triangular covariance}

Let $D$ be a skew field and $J$ the antidiagonal permutation matrix,
$J_{ij}=\delta_{i,n+1-j}$. For $A\in\Mat_n(D)$ set
\begin{equation}\label{eq:Fk}
 F_k(A)=\left|
 A_{\{1,\ldots,k\},\{n-k+1,\ldots,n\}}
 \right|_{k,1}
 =|(AJ)_{[1,k],[1,k]}|_{k,k}.
\end{equation}

\begin{lemma}[Triangular covariance]\label{lem:triang}
If $B,C$ are invertible lower triangular matrices over $D$, then
\begin{equation}\label{eq:cov}
 F_k(BAC)=b_{kk}\,F_k(A)\,c_{n-k+1,n-k+1},
\end{equation}
provided that all the quasideterminants involved are defined. In
particular, $F_k(BAC)=F_k(A)$ when $B$ and $C$ are lower unitriangular.
\end{lemma}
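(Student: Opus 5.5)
The plan is to reduce \eqref{eq:cov} to a statement about a single $k\times k$ block, and then to use the standard behaviour of quasideterminants under elementary row and column operations.

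\emph{Step 1: localization to the block.} Because $B$ is lower triangular, row $i$ of $BA$ equals $\sum_{j\le i} b_{ij}\,\mathrm{row}_j(A)$. Hence rows $1,\dots,k$ of $BA$ are $B'\cdot A_{\{1,\dots,k\},*}$, where $B'=B_{[1,k],[1,k]}$ is again invertible lower triangular.

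Dually, column $j$ of $AC$ equals $\sum_{l\ge j}\mathrm{col}_l(A)\,c_{lj}$. So columns $n-k+1,\dots,n$ of $AC$ only involve columns $n-k+1,\dots,n$ of $A$, through the lower-right block $C'=C_{[n-k+1,n],[n-k+1,n]}$, which is invertible lower triangular.

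Writing $X=A_{\{1,\dots,k\},\{n-k+1,\dots,n\}}$, we therefore get
\[
(BAC)_{\{1,\dots,k\},\{n-k+1,\dots,n\}}=B'XC',
\]
and the claim becomes $|B'XC'|_{k,1}=b_{kk}\,|X|_{k,1}\,c_{n-k+1,n-k+1}$.

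\emph{Step 2: elementary operations.} I will use the following facts about $|X|_{ij}$.
\begin{itemize}
\item[(a)] Left-multiplying row $i$ by $\lambda$ multiplies $|X|_{ij}$ on the left by $\lambda$.
\item[(b)] Adding to row $i$ left multiples of other rows leaves $|X|_{ij}$ unchanged.
\item[(c)] Replacing a row $l\ne i$ by an invertible left-linear combination of rows different from $i$ leaves $|X|_{ij}$ unchanged.
\item[(a$'$)--(c$'$)] The column analogues of (a)--(c), with right multiplication.
\end{itemize}
All of these follow from the definition \eqref{eq:qd}.
\begin{itemize}
\item For (a) and (b): $r_i^{\widehat j}$ and $a_{ij}$ change by $\lambda$, respectively by $\mu\,(\text{rows of }A^{ij}\text{ plus the }c_j^{\widehat i}\text{ entries})$. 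In the latter case the correction $\mu A^{ij}(A^{ij})^{-1}c_j^{\widehat i}$ cancels exactly.
\item For (c): $A^{ij}$ and $c_j^{\widehat i}$ are both replaced by $GA^{ij}$ and $Gc_j^{\widehat i}$ for the same invertible $G$, so $G$ cancels in $(A^{ij})^{-1}c_j^{\widehat i}$.
\end{itemize}
Alternatively, all three facts can be read off from $|X|_{ij}=\big((X^{-1})_{ji}\big)^{-1}$ when $X$ is invertible.

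\emph{Step 3: apply to $B'$ and $C'$.} For $l<k$, row $l$ of $B'X$ is a combination of rows $1,\dots,l$. None of these is the distinguished row $k$, and the transformation they form is lower triangular with invertible diagonal, so (c) applies. Row $k$ is $b_{kk}\,\mathrm{row}_k(X)$ plus left multiples of other rows, so (a) and (b) give the factor $b_{kk}$ on the left.

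Symmetrically, for $j>1$, column $j$ of $XC'$ involves only columns $l\ge j>1$, so (c$'$) applies. Column $1$ is $\mathrm{col}_1(X)\,c'_{11}$ plus right multiples of the other columns, and $c'_{11}=c_{n-k+1,n-k+1}$. Then (a$'$) and (b$'$) give the right factor.

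Combining the two gives \eqref{eq:cov}. The unitriangular case is immediate from it.

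The main obstacle is bookkeeping in Step 2: the operations must be applied in an order in which every intermediate quasideterminant is defined, consistent with the proviso in the lemma. I would handle this by doing all the row operations at once as a single left multiplication by $B'$, decomposed as (diagonal)$\times$(unipotent), and likewise for the columns. I would then verify each fact for one block multiplication rather than for a sequence of elementary steps.
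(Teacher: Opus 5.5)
Your proof is correct, but it takes a different route from the paper's. The paper conjugates by $J$. With $X=AJ$, the northeastern block becomes the leading principal $k\times k$ block and $JCJ$ is upper triangular. The paper then identifies $F_k$ with the last diagonal factor of the Gauss ($LDU$) decomposition of that block. Left multiplication by a lower triangular matrix and right multiplication by an upper triangular one only rescale the diagonal Gauss factors, the last one acquiring $b_{kk}$ on the left and $c_{n-k+1,n-k+1}$ on the right.

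You avoid $J$. You observe that the relevant corner $C'$ of $C$ is lower triangular, with the distinguished column as its first column, and you check the row/column-operation rules directly from \eqref{eq:qd}. These are precisely the Gelfand--Retakh properties the paper cites without writing out.

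The paper's version is shorter and matches the later reading of the $Q_k$ as diagonal Gauss coordinates. However, it rests on the theorem identifying principal quasideterminants with Gauss factors. As written, it also uses the existence of the full Gauss decomposition of the block, i.e.\ invertibility of all smaller leading principal submatrices. Yours needs only invertibility of $X^{k1}$, which is exactly the condition for $F_k(A)$ to be defined. Moreover, $(B'XC')^{k1}=B''X^{k1}C''$ with $B''=B'_{[1,k-1],[1,k-1]}$ and $C''=C'_{[2,k],[2,k]}$ invertible triangular. Hence $F_k(A)$ and $F_k(BAC)$ are defined under the same condition. Doing the computation once for the whole block, as you propose, therefore settles the bookkeeping concern in your last paragraph immediately.
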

\begin{proof}
The assertion follows from the elementary properties of
quasideterminants established by Gelfand and Retakh
\cite[\S~I.2.5]{GelfandRetakh1997}. Its invariant part is a version
of the corresponding property of noncommutative flag coordinates
\cite[\S~II.2.7]{GelfandRetakh1997}. For completeness, we recall a
short proof that keeps track of the order of noncommuting factors.

Put $X=AJ$. Then $(BAC)J=BX(JCJ)$ and $JCJ$ is upper triangular.
The leading block of size $k$ in this product is
$B_{[k]}X_{[k]}(JCJ)_{[k]}$. Its last principal quasideterminant is
the last diagonal factor in its Gauss decomposition over $D$.
Left multiplication by a lower triangular matrix multiplies this
factor on the left by $b_{kk}$; right multiplication by an upper
triangular matrix multiplies it on the right by
$(JCJ)_{kk}=c_{n-k+1,n-k+1}$. The displayed order of the factors is
preserved, proving \eqref{eq:cov}.
\end{proof}

\section{The triangular RLL realization}

Another essential ingredient is the triangular RLL realization of
the quantum group $U_q(\mathfrak{gl}_n)$ \cite{FRT1990}. It uses an
upper triangular matrix $T^+$ and a lower triangular matrix $T^-$
satisfying
\begin{equation}\label{eq:RLL}
 R_FT^\pm_1T^\pm_2=T^\pm_2T^\pm_1R_F,
 \qquad R_FT^+_1T^-_2=T^-_2T^+_1R_F.
\end{equation}
Their diagonal entries obey
\begin{equation}\label{eq:K}
 K_s=t^+_{ss}=(t^-_{ss})^{-1},\qquad [K_s,K_t]=0.
\end{equation}

The matrix $M=T^+(T^-)^{-1}$ satisfies the reflection equation with
the braiding $PR_F$. Set $J_{ij}=\delta_{i,n+1-j}$. Directly from
\eqref{eq:RF},
\begin{equation}\label{eq:conventions}
 (J\otimes J)(PR_F)(J\otimes J)=R_FP.
\end{equation}
Hence we obtain a homomorphism
\begin{equation}\label{eq:embedding}
 \Phi:\mathcal A_q(n)\longrightarrow U_q(\mathfrak{gl}_n),
 \qquad \Phi(L)=JT^+(T^-)^{-1}J.
\end{equation}
For generic $q$, this is an injective realization of the RE algebra;
see the Joseph--Letzter--Rosso theory and
\cite[\S~1.4]{JordanWhite2020}.

Put
\begin{equation}\label{eq:BC}
 B=JT^+J,\qquad C=J(T^-)^{-1}J.
\end{equation}
Here $B$ is lower triangular and $C$ is upper triangular. Lemma
\ref{lem:triang} gives
\begin{equation}\label{eq:IQ}
 \boxed{\Phi(I_k)=K_{n-k+1}Q_k,\qquad
 Q_k=\left|C_{\{1,\ldots,k\},\{n-k+1,\ldots,n\}}
 \right|_{k,1}.}
\end{equation}

\begin{lemma}\label{lem:C}
The matrix $C$ satisfies the RTT relation
\begin{equation}\label{eq:CRTT}
 R_FC_1C_2=C_2C_1R_F.
\end{equation}
\end{lemma}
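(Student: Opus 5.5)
The plan is to derive \eqref{eq:CRTT} directly from the $T^-T^-$ relation in \eqref{eq:RLL}, in three moves. First pass to the inverse matrix $S=(T^-)^{-1}$. Then swap the tensor factors. Finally conjugate by $J\otimes J$. The only genuine input beyond \eqref{eq:RLL} is a symmetry of $R_F$ under the antidiagonal flip, of the same kind as \eqref{eq:conventions}.

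\emph{Step 1 (inversion).} The matrix $R_F$ is invertible: it is block upper triangular with invertible diagonal blocks, since $R_F-P$ is a $\K$-linear combination of nilpotent-type pieces. More concretely, $R_F$ satisfies the Hecke relation $(PR_F-q)(PR_F+q^{-1})=0$. Inverting both sides of $R_FT^-_1T^-_2=T^-_2T^-_1R_F$ gives
\[
S_2S_1R_F^{-1}=R_F^{-1}S_1S_2 .
\]
Multiplying by $R_F$ on both sides then yields
\[
R_FS_2S_1=S_1S_2R_F .
\]

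\emph{Step 2 (flip).} Conjugate by $P$, using $PX_1P=X_2$. This gives
\[
R_F^{21}S_1S_2=S_2S_1R_F^{21},\qquad R_F^{21}=PR_FP .
\]
Here
\[
R_F^{21}=q\sum_iE_{ii}\otimes E_{ii}+\sum_{i\ne j}E_{ii}\otimes E_{jj}+(q-q^{-1})\sum_{i<j}E_{ji}\otimes E_{ij}.
\]

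\emph{Step 3 (antidiagonal conjugation).} Conjugation by $J$ sends $E_{ab}$ to $E_{n+1-a,\,n+1-b}$, so it reverses the order of indices. Under $J\otimes J$:
\begin{itemize}
\item the diagonal parts of $R_F^{21}$ are preserved;
\item the term $E_{ji}\otimes E_{ij}$ with $i<j$ becomes $E_{ab}\otimes E_{ba}$ with $a=n+1-j<b=n+1-i$.
\end{itemize}
Hence
\[
(J\otimes J)R_F^{21}(J\otimes J)=R_F .
\]
This is equivalent to \eqref{eq:conventions}, because $PR_F=R_F^{21}P$ and $P$ commutes with $J\otimes J$. Since $J^2=\one$ and $C=JSJ$, we have $C_1C_2=(J\otimes J)S_1S_2(J\otimes J)$, and similarly for $C_2C_1$. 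Conjugating the relation of Step 2 by $J\otimes J$ therefore gives $R_FC_1C_2=C_2C_1R_F$, which is \eqref{eq:CRTT}.

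I expect no real obstacle. The only points that need care are, first, keeping the order of the tensor legs straight under inversion, since inversion reverses $T_1T_2$ to $S_2S_1$. Second, one must check the index reversal in Step 3 carefully, so that one lands on $R_F$ and not on $R_F^{21}$ or $R_F^{-1}$.
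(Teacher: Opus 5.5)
Your proof is correct and follows the paper's route: invert the $T^-T^-$ relation, then conjugate by $J\otimes J$, which exchanges $R_F$ and $(R_F)_{21}$. The only difference is how you clear the inverse. You rewrite the inverted relation as $R_FS_2S_1=S_1S_2R_F$ and flip with $P$, whereas the paper arrives at a relation with $(R_F)_{21}^{-1}$ and converts it using the Hecke identity $(R_F)_{21}^{-1}=R_F-(q-q^{-1})P$ together with $PC_1C_2=C_2C_1P$. Your variant therefore avoids the Hecke relation altogether. One small slip: your aside that $R_F-P$ is nilpotent is wrong, since it is $R_F$ minus its diagonal part that is nilpotent. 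This does not matter, because the Hecke relation you also cite already shows $R_F$ is invertible.
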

\begin{proof}
Write $V=(T^-)^{-1}$. Inverting the relation
$R_FT^-_1T^-_2=T^-_2T^-_1R_F$ yields
$R_F^{-1}V_1V_2=V_2V_1R_F^{-1}$.
As $(J\otimes J)R_F(J\otimes J)=(R_F)_{21}$,
$C=JVJ$ satisfies the analogous relation with
$(R_F)_{21}^{-1}$. Finally,
$(R_F)_{21}^{-1}=R_F-(q-q^{-1})P$, while
$PC_1C_2=C_2C_1P$; thus the two RTT forms are equivalent.
\end{proof}

\section{Nested quantum minors and proof of the theorem}

For a square RTT block $A=(a_{ij})$ of size $k$, we use the
quantum determinant
\begin{equation}\label{eq:qdet}
 \qdet(A)=\sum_{\sigma\in S_k}
 (-q^{-1})^{\ell(\sigma)}
 a_{1,\sigma(1)}\cdots a_{k,\sigma(k)}.
\end{equation}
Define the nested northeastern quantum minors by
\begin{equation}\label{eq:Delta}
 \Delta_0=1,\qquad
 \Delta_k=\qdet
 C_{\{1,\ldots,k\},\{n-k+1,\ldots,n\}}.
\end{equation}
Let $\mathcal B_k$ denote the subalgebra of
$U_q(\mathfrak{gl}_n)$ generated by all entries of the corresponding
square block.

\begin{lemma}[Centers of nested RTT subalgebras]\label{lem:centers}
There are inclusions
\[
 \mathcal B_1\subset\mathcal B_2\subset\cdots\subset
 \mathcal B_n,
 \qquad \Delta_k\in Z(\mathcal B_k).
\]
In particular,
\begin{equation}\label{eq:commDelta}
 [\Delta_i,\Delta_j]=0\quad\text{for all }i,j.
\end{equation}
\end{lemma}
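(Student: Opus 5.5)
The plan has three parts: show the blocks are nested, show that each block is itself a square RTT matrix so that its quantum determinant is central in the algebra it generates, and then deduce the pairwise commutativity.

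\textbf{Nesting.} The $k$-th block consists of the rows $\{1,\ldots,k\}$ and the columns $\{n-k+1,\ldots,n\}$ of $C$. Passing from $k$ to $k+1$ adds row $k+1$ and column $n-k$. The index sets therefore grow monotonically, every generator of $\mathcal B_k$ is a generator of $\mathcal B_{k+1}$, and $\mathcal B_k\subset\mathcal B_{k+1}$ holds by definition.

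\textbf{Each block is an RTT matrix.} By Lemma~\ref{lem:C}, $C$ satisfies $R_FC_1C_2=C_2C_1R_F$. Written in components, this relation expresses each product $c_{ij}c_{kl}$ through products whose row indices lie in $\{i,k\}$ and whose column indices lie in $\{j,l\}$. The coefficients depend only on the relative order of $i,k$ and of $j,l$. Now restrict to the row set $I=\{1,\ldots,k\}$ and the column set $J'=\{n-k+1,\ldots,n\}$, and relabel both by the order-preserving bijections onto $\{1,\ldots,k\}$. Because the relabelling preserves order, the block $C_{I,J'}$ satisfies the RTT relation of type $A_{k-1}$ with the $k\times k$ matrix $R_F$. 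This is the standard fact that any submatrix of an FRT quantum matrix with ordered row and column sets is again a quantum matrix.

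\textbf{Centrality of $\Delta_k$.} For a $k\times k$ matrix $A$ satisfying $R_FA_1A_2=A_2A_1R_F$, the quantum determinant \eqref{eq:qdet} is central in the algebra generated by the entries of $A$; this is the theorem of \cite{FRT1990}. I would use the usual proof. The $q$-antisymmetrizer $A^{(k)}_q$ is built from $\check R$, and the RTT relation gives
\[
A^{(k)}_qA_1\cdots A_k=A_1\cdots A_kA^{(k)}_q=\qdet(A)\,A^{(k)}_q .
\]
Then apply the fusion argument with one extra auxiliary space to obtain $\qdet(A)A_0=A_0\qdet(A)$. The main point to check is that the normalization $(-q^{-1})^{\ell(\sigma)}$ and the row-ordered form in \eqref{eq:qdet} match the quantum determinant for the convention $R_FT_1T_2=T_2T_1R_F$. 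If they do not, I would replace $q$ by $q^{-1}$ or pass to the column-ordered expansion, which coincides with the row-ordered one. Centrality itself is insensitive to this choice. Applying the result to the relabelled block gives $\Delta_k\in Z(\mathcal B_k)$.

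\textbf{Commutativity.} Take $i\le j$. Then $\Delta_i\in\mathcal B_i\subset\mathcal B_j$, and $\Delta_j$ is central in $\mathcal B_j$, so $[\Delta_i,\Delta_j]=0$. The case $i>j$ follows by symmetry, and $\Delta_0=1$ is trivial.

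The main obstacle is the convention check in the centrality step, namely that the sign and $q$-power conventions in \eqref{eq:qdet} agree with those of the RTT relation satisfied by $C$. The nesting and the reduction to a sub-RTT algebra are formal.
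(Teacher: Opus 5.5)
Your proposal is correct and matches the paper's proof: the blocks are nested, each block is an RTT matrix by \eqref{eq:CRTT}, the quantum determinant of a square RTT matrix is central, and then $\Delta_i\in\mathcal B_i\subset\mathcal B_j$ with $\Delta_j\in Z(\mathcal B_j)$; your order-preserving relabelling makes explicit a step the paper leaves implicit. The convention check you flag does go through: with $R_F$ as in \eqref{eq:RF}, the component relations give $c_{ij}c_{il}=q^{-1}c_{il}c_{ij}$ for $j<l$, i.e.\ relations of $\mathcal O(M_{q^{-1}}(k))$ type, whose quantum determinant is exactly the row-ordered sum with weights $(-q^{-1})^{\ell(\sigma)}$ in \eqref{eq:qdet}. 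However, drop your remark that centrality is insensitive to this choice, since with the opposite sign (e.g.\ $ad-q\,bc$ for a $2\times 2$ block) the element no longer commutes with $a$.
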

\begin{proof}
By \eqref{eq:CRTT}, each of the selected square blocks is itself
an RTT matrix. The quantum determinant of a square RTT matrix is
central in the algebra generated by its matrix entries
\cite[\S~3.4]{Isaev2004}; see also \cite{EtingofRetakh1998}.
For $i<j$, the smaller northeastern block is contained in the
larger one, so $\Delta_i\in\mathcal B_j$ while
$\Delta_j\in Z(\mathcal B_j)$. This proves \eqref{eq:commDelta}.
\end{proof}

\begin{lemma}[Flag quasideterminants as ratios of minors]
\label{lem:ratio}
In the skew field of fractions,
\begin{equation}\label{eq:ratio}
 \boxed{Q_k=(-q^{-1})^{k-1}\Delta_k\Delta_{k-1}^{-1}.}
\end{equation}
In particular, $[Q_i,Q_j]=0$.
\end{lemma}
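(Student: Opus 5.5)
The plan is to reduce \eqref{eq:ratio} to the standard quantum Cramer rule for an RTT matrix, combined with the Gelfand--Retakh identity expressing a quasideterminant through an entry of the inverse matrix. Fix $k$ and put $A=C_{\{1,\ldots,k\},\{n-k+1,\ldots,n\}}$. By \eqref{eq:CRTT} this $k\times k$ block is again an RTT matrix for $R_F$ of size $k$, exactly as in the proof of Lemma~\ref{lem:centers}. The key combinatorial observation concerns the submatrix $A^{k1}$ obtained by deleting row $k$ and column $1$. It has rows $\{1,\ldots,k-1\}$ and columns $\{n-k+2,\ldots,n\}$, so it is precisely the $(k-1)$-st northeastern block, and $\qdet(A^{k1})=\Delta_{k-1}$. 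Thus \eqref{eq:ratio} is the instance $(i,j)=(k,1)$ of a general formula
\[
 |A|_{ij}=\epsilon_{ij}\,\qdet(A)\,\qdet(A^{ij})^{-1}
\]
for quasideterminants of a quantum matrix, of Krob--Leclerc type. Here $\epsilon_{ij}$ is a power of $-q^{\pm1}$ determined by the convention \eqref{eq:qdet}.

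To establish this formula, I would proceed in three steps. First, recall the Gelfand--Retakh identity $|A|_{ij}=\bigl((A^{-1})_{ji}\bigr)^{-1}$ in the skew field \cite{GelfandRetakh1997}. Second, use the quantum adjugate $\widetilde A$ with $\widetilde A_{ji}=(-q^{-1})^{\,\cdot\,}\qdet(A^{ij})$, the exponent depending on $i-j$. For an RTT matrix one has $A\widetilde A=\widetilde A A=\qdet(A)\cdot\one$ \cite[\S~3.4]{Isaev2004}. Since $\qdet(A)$ is central in $\mathcal B_k$ by Lemma~\ref{lem:centers}, this gives $A^{-1}=\qdet(A)^{-1}\widetilde A$. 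Hence $(A^{-1})_{1k}$ is a signed $q$-power multiple of $\qdet(A)^{-1}\Delta_{k-1}$. Third, invert this expression to obtain $Q_k=|A|_{k1}=c_k\,\Delta_{k-1}^{-1}\Delta_k$. Because $[\Delta_k,\Delta_{k-1}]=0$ by \eqref{eq:commDelta}, the order of the two factors may be swapped, which yields the displayed form $\Delta_k\Delta_{k-1}^{-1}$.

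The main obstacle is pinning down the constant $c_k=(-q^{-1})^{k-1}$. Its exponent comes from the cofactor of the corner entry $(k,1)$, so it reflects the distance $k-1$ between row $k$ and column $1$. Whether the sign is $-q$ or $-q^{-1}$ depends on the ordering in \eqref{eq:qdet} and on the form of $R_F$. I would fix it by checking the case $k=2$ directly from \eqref{eq:CRTT}. There $Q_2=a_{21}-a_{22}a_{12}^{-1}a_{11}$ with $\Delta_1=a_{12}$. Using the relation $a_{12}a_{11}=q^{\pm1}a_{11}a_{12}$ (and the analogous relations among $a_{21},a_{22}$), one reduces $Q_2a_{12}$ to $-q^{-1}\qdet(A)$. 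The general case then follows by the uniform cofactor formula, or alternatively by induction on $k$ via the heredity property of quasideterminants. For the induction route, one would write $|A|_{k1}$ through the $2\times2$ quasideterminant of the matrix of $(k-1)$-quasideterminants and apply Sylvester's identity for quantum minors.

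Finally, the commutativity claim $[Q_i,Q_j]=0$ is immediate once \eqref{eq:ratio} holds. Each $Q_k$ is a scalar multiple of $\Delta_k\Delta_{k-1}^{-1}$. The $\Delta$'s pairwise commute by \eqref{eq:commDelta}, and in a skew field inverses of commuting elements commute with each other and with those elements. Hence all the $Q_k$ lie in a commutative subfield generated by $\Delta_1,\ldots,\Delta_n$.
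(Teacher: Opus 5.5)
Your proposal is correct and follows essentially the same route as the paper: the Gelfand--Retakh identity $|A|_{k1}=\bigl((A^{-1})_{1k}\bigr)^{-1}$, the quantum cofactor formula for $(A^{-1})_{1k}$, the identification $\qdet(A^{k1})=\Delta_{k-1}$, and the commutativity of $\Delta_{k-1}$ and $\Delta_k$. The one difference is that you fix the constant by checking $k=2$ directly, whereas the paper simply quotes the normalization $(-q)^{k-1}$; the relations coming from $R_F$ do give $Q_2a_{12}=-q^{-1}\qdet(A)$, so your check confirms $(-q^{-1})^{k-1}$.
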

\begin{proof}
Let $A$ be the square block defining $\Delta_k$.
The quantum inverse-matrix formula, in our normalization, gives
\[
 (A^{-1})_{1k}=(-q)^{k-1}
 \qdet(A^{k1})\,(\qdet A)^{-1}.
\]
By the quasideterminant inverse formula,
$(A^{-1})_{1k}=|A|_{k,1}^{-1}$ and
$\qdet(A^{k1})=\Delta_{k-1}$. Taking inverses and using
commutativity of $\Delta_{k-1}$ and $\Delta_k$ gives
\eqref{eq:ratio}; see \cite{EtingofRetakh1998}.
\end{proof}

The mixed RLL relations imply
\begin{equation}\label{eq:weightC}
 K_sC_{ab}K_s^{-1}
 =q^{\delta_{s,n-b+1}-\delta_{s,n-a+1}}C_{ab}.
\end{equation}
Consequently, by \eqref{eq:ratio},
\begin{equation}\label{eq:weightQ}
 K_sQ_kK_s^{-1}
 =q^{\delta_{sk}-\delta_{s,n-k+1}}Q_k.
\end{equation}

\begin{proof}[Proof of Theorem~\ref{thm:main}]
We now return to the main assertion. Since the relevant minors are
of flag type, \eqref{eq:IQ} gives
$\Phi(I_i)=K_{n-i+1}Q_i$. From \eqref{eq:weightQ},
\begin{equation}\label{eq:beta}
 K_{n-i+1}Q_j=q^{\beta_{ij}}Q_jK_{n-i+1},\qquad
 \beta_{ij}=\delta_{i,n-j+1}-\delta_{ij}.
\end{equation}
These exponents are symmetric, $\beta_{ij}=\beta_{ji}$.
Together with commutativity of the $Q_i$ and of the $K_s$, this gives
\begin{align*}
 \Phi(I_i)\Phi(I_j)
 &=q^{-\beta_{ji}}
 K_{n-i+1}K_{n-j+1}Q_iQ_j\\
 &=q^{-\beta_{ij}}
 K_{n-j+1}K_{n-i+1}Q_jQ_i
 =\Phi(I_j)\Phi(I_i).
\end{align*}
Injectivity of $\Phi$ completes the proof.
\end{proof}

\section{A Gelfand--Tsetlin-type construction}

Lemma~\ref{lem:centers} provides a chain of nested algebras with
distinguished central elements:
\[
 \Delta_k\in Z(\mathcal B_k),\qquad
 \mathcal B_1\subset\cdots\subset\mathcal B_n.
\]
This resembles the Gelfand--Tsetlin subalgebra constructed from the
centers of nested $U_q(\mathfrak{gl}_k)$. In our setting, however, the
$\mathcal B_k$ are generated by entries of square RTT blocks. It would
be interesting to investigate commutative families arising from more
general chains of nested blocks.

The nested minors $\Delta_k$ commute, and their ratios $Q_k$
describe diagonal coordinates in the noncommutative Gauss
factorization; see \eqref{eq:Fk} and \eqref{eq:ratio}. The $Q_k$ should
not be regarded as central elements of the entire Borel subalgebra:
whenever their Cartan weights are nonzero, \eqref{eq:weightQ}
precludes centrality. The analogy is thus with the \emph{method} of
using centers of nested algebras, not with a literal identification of
our construction with the Gelfand--Tsetlin subalgebra.

A simple corollary yields commuting elements directly in the quantum
group, without inverting minors. Put
\[
 A_k=\prod_{r=n-k+1}^{n}K_r,
 \qquad \mathcal G_k=A_k\Delta_k.
\]
Then $[\mathcal G_i,\mathcal G_j]=0$. Indeed, the weights of
$\Delta_j$ imply
\[
 A_i\Delta_j A_i^{-1}=q^{\mu_{ij}}\Delta_j,
 \qquad \mu_{ij}=\max(0,i+j-n)-\min(i,j)
 =\mu_{ji},
\]
and the diagonal factors cancel exactly as in the proof of
Theorem~\ref{thm:main}.

\section{Quantum cluster coordinates}\label{sec:cluster}

The nested northeastern minors $\Delta_1,\ldots,\Delta_n$ are
natural quantum flag minors. In the theory of quantum cluster
algebras, such minors appear in cluster charts for quantum matrix and
flag algebras; see \cite{BZ2005,GoodearlYakimov2017}. Our proof does
not require choosing a particular cluster seed, but admits an
interpretation in terms of quantum tori.

Suppose that a chosen quantum cluster chart contains
$\Delta_1,\ldots,\Delta_n$ and allows them to be inverted. In its
quantum torus, the elements
\begin{equation}\label{eq:clusterMonomial}
 Q_k=(-q^{-1})^{k-1}\Delta_k\Delta_{k-1}^{-1}
\end{equation}
are Laurent monomials. By Lemma~\ref{lem:centers}, the restriction
of the $q$-commutation matrix to the $\Delta_k$ is zero. Thus the
ratios \eqref{eq:clusterMonomial} also form a commuting sector of the
torus. Upon adjoining the Cartan elements $K_s$, commutativity of the
images of the $I_k$ is ensured by the symmetry of the exponents
\eqref{eq:beta}.

The flag family can therefore be viewed as a system of
\emph{monomial coordinates of a commuting subtorus} in an extended
quantum cluster chart. However, the $Q_k$ need not themselves be
individual cluster $A$-variables or $X$-coordinates. Constructing an
explicit seed containing all $\Delta_k$ simultaneously and studying
its behavior under the passage from RTT to RE remain separate
questions.

\section{Relation to Skrypnyk's construction}\label{sec:skrypnyk}

Our construction has points of contact with the $k$-chop method
developed by T.~Skrypnyk for quantizing the full Toda system
\cite{Skrypnyk2000}. Unlike the $q$-deformed algebra considered here,
Skrypnyk works with the ordinary universal enveloping algebra
$U(\mathfrak{gl}_n)$ and its localizations.

It is natural to ask whether the family \eqref{eq:main} can occur as
part of a compatible $q$-deformation of Skrypnyk's full spectral
family.

\section{Conclusion and open problems}

The commutativity of the northeastern quasideterminants of the RE
matrix follows from three properties: covariance under triangular
transformations, centrality of quantum determinants of nested RTT
blocks, and symmetry of the Cartan weight exponents. No spectral
parameter is needed for the argument.

Several directions merit further investigation:
\begin{enumerate}
\item \emph{Cluster realization.} Construct an explicit quantum
cluster seed for the minors $\Delta_k$ and describe the resulting
commutative subtorus directly in the RE algebra.
\item \emph{Comparison with the $k$-chop method.} Compare the
$q$-deformed minor family with Skrypnyk's polynomial and rational
families, and determine whether a compatible specialization and a
natural extension of the subalgebra exist.
\item \emph{A spectral parameter.} Determine whether a commuting
spectral-parameter-dependent family exists in arbitrary rank.
\item \emph{The size of the commutative subalgebra.} Establish
algebraic independence of the chosen generators, maximality of the
resulting subalgebra in a suitable localization, and its properties
in representations.
\end{enumerate}

\section*{Acknowledgments}
The conjecture proved in this paper was formulated during the author's
stay at the Institut des Hautes \'Etudes Scientifiques (IHES) in 2024.
The author is grateful to IHES for its exceptionally stimulating and
productive research environment and, in particular, to Dmitry Gurevich
for inspiring discussions.

This work was carried out with the support of Russian Science
Foundation grant No.~26-11-00379, which funds the author's research
at P. G. Demidov Yaroslavl State University.

\section*{Declaration on the use of generative AI}
OpenAI's ChatGPT was used to assist with exploratory mathematical reasoning
and symbolic computations, language editing, English translation, and
preparation of the LaTeX manuscript. AI-assisted suggestions and computations
are not independent verification of the mathematical results. Responsibility
for the mathematical statements, references, and final manuscript rests with
the author.


\begin{thebibliography}{99}
\bibitem{Talalaev2024}
D.~V.~Talalaev,
\emph{RE-algebras, quasi-determinants and the full Toda system},
Partial Differential Equations in Applied Mathematics
\textbf{11} (2024), 100898.
\url{https://arxiv.org/abs/2406.07434}.

\bibitem{FRT1990}
L.~D.~Faddeev, N.~Yu.~Reshetikhin, L.~A.~Takhtajan,
\emph{Quantization of Lie groups and Lie algebras},
Leningrad Math. J. \textbf{1} (1990), 193--225.

\bibitem{JordanWhite2020}
D.~Jordan, N.~White,
\emph{The center of the reflection equation algebra via quantum minors},
J. Algebra \textbf{542} (2020), 308--342.
\url{https://doi.org/10.1016/j.jalgebra.2019.08.038}.

\bibitem{GelfandRetakh1991}
I.~M.~Gelfand, V.~S.~Retakh,
\emph{Determinants of matrices over noncommutative rings},
Funct. Anal. Appl. \textbf{25} (1991), 91--102.

\bibitem{GelfandRetakh1997}
I.~Gelfand and V.~Retakh,
\emph{Quasideterminants, I},
Selecta Math. (N.S.) \textbf{3} (1997), 517--546.
\url{https://doi.org/10.1007/s000290050019}.

\bibitem{EtingofRetakh1998}
P.~Etingof, V.~Retakh,
\emph{Quantum determinants and quasideterminants},
\url{https://arxiv.org/abs/math/9808065}.

\bibitem{Isaev2004}
A.~P.~Isaev, \emph{Quantum groups and Yang--Baxter equations},
MPIM Preprint Series, no.~132 (2004), \S~3.4.
\url{https://archive.mpim-bonn.mpg.de/id/eprint/3447/}.

\bibitem{BZ2005}
A.~Berenstein, A.~Zelevinsky,
\emph{Quantum cluster algebras},
Adv. Math. \textbf{195} (2005), 405--455.

\bibitem{GoodearlYakimov2017}
K.~R.~Goodearl, M.~T.~Yakimov,
\emph{Quantum cluster algebra structures on quantum nilpotent algebras},
Mem. Amer. Math. Soc. \textbf{247} (2017), no.~1169.
\url{https://arxiv.org/abs/1309.7869}.

\bibitem{Skrypnyk2000}
T.~Skrypnyk,
\emph{Commutative integrability of the full quantum Toda system},
Methods Funct. Anal. Topology \textbf{6} (2000), no.~4, 70--84.
\url{https://mfat.imath.kiev.ua/article/?id=143}.

\bibitem{BerensteinRetakh2005}
A.~Berenstein, V.~Retakh,
\emph{Noncommutative double Bruhat cells and their factorizations},
Int. Math. Res. Not. (2005), no.~8, 477--516.
\url{https://arxiv.org/abs/math/0407010}.
\end{thebibliography}
\end{document}